\newtheorem{theorem}{Theorem}
\newtheorem{definition}[theorem]{Definition}
\newtheorem{lemma}[theorem]{Lemma}
\newenvironment{proof}[1][Proof]{\textbf{#1.} }{\ \rule{0.5em}{0.5em}}
\begin{document}

\title{Cauchy problem with data on a characteristic cone for the {Einstein-Vlasov}
equations\footnote{Vienna preprint UWThPh-2012-21}}
\author{Yvonne Choquet-Bruhat \and Piotr T. Chru\'sciel}
\maketitle

\begin{center}
\textit{Dedicated to Mario Novello}
\end{center}

\section{Introduction.}

In recent papers~\cite{CCM2,CCM4} (henceforth denoted by I and II) we have
proved existence and uniqueness theorems for solutions of the Cauchy problem
for the Einstein equations in \emph{vacuum} with data on a characteristic
cone $C_{O}$, with vertex at $O$, generalising a construction of~\cite{RendallCIVP,DamourSchmidt}. We have used the tensorial splitting of the Ricci tensor of a Lorentzian
metric $g$ on a manifold $V$ as the sum of a quasidiagonal hyperbolic system
acting on $g$ and a linear first order operator acting on a vector $H,$
called the wave-gauge vector. The vector $H$ vanishes if $g$ is in wave
gauge; that is, if the identity map is a wave map from $(V,g)$ onto $(V,\hat{%
g})$, with $\hat{g}$ some given metric, which we have chosen to be
Minkowski. The data needed for the reduced PDEs is the trace $\bar{g}$ of $g$
on $C_{O},$ but the geometric initial data is the degenerate quadratic form $%
\tilde{g}$ induced by $g$ on $C_{O},$ the missing part of $\bar{g}$ is
determined by a hierarchical system of ordinary differential equations%
\footnote{%
For previous writing of these equations in the case of two intersecting
surfaces in four-dimensional spacetime see Rendall~\cite{RendallCIVP} and
Damour-Schmidt~\cite{DamourSchmidt}.}, called the wave-map-gauge
constraints, along the rays of $C_{O}$, deduced from the contraction of the
Einstein tensor with a tangent to the rays (it is also possible to prescribe
$\tilde{g}$ up to a conformal factor and impose a coordinate condition).

{{The above generalises in a straightforward manner to Einstein equations
coupled to matter fields whose energy-momentum tensor satisfies specific
structure conditions. This is the case for Maxwell fields and scalar fields.
However, it is not immediately clear that kinematic matter sources described
by a Vlasov distribution field $f$ fit into this scheme. In fact, there is a
basic issue arising, related to the fact that the support of the
distribution function $f$ has to be contained within the subset of the
tangent bundle where the momentum of the particles is timelike future
pointing; outside of this region the Vlasov particles are \emph{tachyons. }%
But a \emph{no-tachyons condition} requires \emph{a priori} knowledge of the
whole metric on the initial data hypersurface, while in the wave-map scheme
above only part of the metric is known before the constraints are solved,
the remaining components being determined by ODEs along the generators of
the hypersurface. It has been shown in~\cite{ChPaetz} how to modify the
scheme so that the whole metric can be prescribed on the initial
characteristic hypersurface, which solves that problem, and allows e.g. the
treatment of Vlasov particles with a given mass. However, the formulation in
I and II has the clear advantage, that only geometrically significant
objects are prescribed as free data. This is not the case with~\cite{ChPaetz}%
, as many components of the initial metric have a gauge character. It is
therefore of interest to see whether kinetic matter fits into the original
scheme of [}}7{], as generalised in I and II. In this article we show that
this is indeed the case. More precisely, we prove that} the Cauchy problem
on a characteristic cone for the \emph{Einstein-Vlasov} system can also be
split into a hierarchical system of ordinary differential equations as
constraints and an evolution problem, in the important case of astrophysical
studies where the masses of ``particles'' take a continuous set of positive
values. We show how to construct physically relevant initial values in this
case.

\section{Einstein-Vlasov system.}

\subsection{Distribution function.}

In kinetic theory the matter is composed of a collection of particles whose
size is negligible at the considered scale: rarefied gases in the
laboratory, galaxies or even clusters of galaxies at the cosmological scale.
The number of particles is so great and their motion so chaotic that it is
impossible to observe their individual motions. It is assumed that the state
of the matter in a spacetime $(V,g)$ is represented\footnote{{\footnotesize %
A mathematical justification of the oncoming of chaos in relativistic
dynamics is a largely open problem.}} by a ``one particle distribution
function'':

\begin{definition}
A \textbf{distribution function} is a scalar function $f$ $\geq 0$ on the
tangent bundle $TV$ to the spacetime $V,$%
\begin{equation*}
f:TV\rightarrow R\text{ \ \ by \ \ }(m,p)\mapsto f(m,p_{m}),\text{ \ \ with
\ \ }m\in V\text{, \ \ }p\in T_{m}V.
\end{equation*}
\end{definition}

The physical meaning of the distribution function is that it gives a mean%
\footnote{{\footnotesize In the sense of Gibbs ensemble.}} number of
particles with momentum $p\in T_{m}V$ at a point $m\in V.$

In astrophysics the particles are stars or even galaxies, they do not take a
finite collection of a priori given masses, but rather a continuous family
of positive masses.

One denotes by $\omega _{g}$ and $\omega _{p}$ respectively the volume forms
on $V$ and $T_{m}V$:
\begin{equation*}
\theta =\omega _{g}\wedge \omega _{p}
\end{equation*}
where, in local coordinates
\begin{equation*}
\omega _{g}=(\det g)^{\frac{1}{2}}dx^{0}\wedge dx^{1}\wedge ...\wedge dx^{n},%
\text{ \ \ \ }\omega _{p}:=(\det g)^{\frac{1}{2}}dp^{0}\wedge
dp^{1}...\wedge dp^{n}.
\end{equation*}

\begin{definition}
The moment of order $k$ of the distribution function $f$ is the symmetric $k$%
-tensor
\begin{equation}
T^{\alpha _{1}...\alpha _{p}}(m):=\int_{T_{m}V}f(m,p)p^{\alpha
_{1}}...p^{\alpha _{p}}\omega _{p}.  \notag
\end{equation}
\end{definition}

The moment of order zero, integral on the fiber $T_{m},$ $m\in V$ of the
distribution function:
\begin{equation*}
M(m):=\int_{T_{m}V}f\text{ }\omega _{p}
\end{equation*}
gives the density of presence of particles at a point $m\in $ $V$. The
second moment of $f$ defines the stress energy tensor it generates. It is a
symmetric 2-tensor on $V$ obtained at any point $m\in V$ by integrating on $%
T_{m}$ the product of $f(m,p)$ by the tensor product $p\otimes p.$

\subsection{Vlasov equation.}

In a Lorentzian spacetime $(V,g)$, in the absence of non gravitational
forces and collisions, each particle follows a geodesic of the spacetime
metric $g,$ i.e. an orbit of the vector field $X=(p,P)$ on $TV,$
\begin{equation}
p^{\alpha }:=\frac{dx^{\alpha }}{d\lambda },\text{ \ \ }\frac{dp^{\alpha }}{%
d\lambda }=P^{\alpha },\text{ \ \ with \ \ }P^{\alpha }:=-\Gamma _{\lambda
\mu }^{\alpha }p^{\lambda }p^{\mu }  \notag
\end{equation}
with $\lambda $ a canonical parameter and $\Gamma _{\lambda \mu }^{\alpha }$
the Christoffel symbols of the metric $g.$ Recall that the scalar $g(p,p)$
is constant along an orbit of $X$ in $TV,$ recall also that (Liouville
theorem) the volume form $\theta $ is invariant under the geodesic flow;
that is, $\mathcal{L}_{X}$ denoting the Lie derivative with respect to $X,$
it holds that
\begin{equation*}
\mathcal{L}_{X}\theta =0.
\end{equation*}

In a collisionless model the physical law of conservation of particles
imposes to the form $f\theta $ to be invariant under the vector field $X,$
hence $f\theta $ has a zero Lie derivative with respect to $X.$ Since we
already know that $\mathcal{L}_{X}\theta =0$ the invariance reduces to the
\textbf{Vlasov} equation for $f$
\begin{equation}
\mathcal{L}_{X}f\equiv p^{\alpha }\frac{\partial f}{\partial x^{\alpha }}%
+P^{\alpha }\frac{\partial f}{\partial p^{\alpha }}=0,
\notag
\end{equation}
which says that in the phase space $TV$ the derivative of the distribution
function in the direction of $X$ vanishes

A fundamental theorem  is

\begin{theorem}
If the distribution function $f$ satisfies the Vlasov equation, its moments
have vanishing divergence. In particular
\begin{equation*}
\nabla _{\alpha }T^{\alpha \beta }=0.
\end{equation*}
\end{theorem}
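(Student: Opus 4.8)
The plan is to deduce the statement from an identity that holds for \emph{every} sufficiently rapidly decaying distribution function, namely
\begin{equation*}
\nabla _{\alpha }T^{\alpha \beta _{2}\ldots \beta _{k}}(m)=\int_{T_{m}V}(\mathcal{L}_{X}f)\,p^{\beta _{2}}\cdots p^{\beta _{k}}\,\omega _{p},
\end{equation*}
for each $k\geq 1$ and each $m\in V$. Once this is established the theorem is immediate: if $f$ solves the Vlasov equation then $\mathcal{L}_{X}f\equiv 0$, so the right-hand side vanishes, whence $\nabla _{\alpha }T^{\alpha \beta _{2}\ldots \beta _{k}}=0$; taking $k=2$ gives $\nabla _{\alpha }T^{\alpha \beta }=0$.

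To prove the displayed identity I would argue in local coordinates, setting $\mu :=(\det g)^{1/2}$, so that $\omega _{p}=\mu \,dp^{0}\wedge \cdots \wedge dp^{n}$ and $T^{\alpha _{1}\ldots \alpha _{k}}(m)=\int f\,p^{\alpha _{1}}\cdots p^{\alpha _{k}}\,\mu \,dp^{0}\cdots dp^{n}$. Differentiating under the integral sign with respect to $x^{\alpha }$ at fixed $p$ (legitimate under the standing hypothesis that, for each $m$, $f(m,\cdot)$ is smooth with compact support — or rapid decay — in $p$, consistent with $f$ being supported on future timelike momenta) expresses $\partial _{\alpha }T^{\alpha \beta _{2}\ldots \beta _{k}}$ as the fibre integral of $(\partial _{\alpha }f)\,p^{\alpha }p^{\beta _{2}}\cdots p^{\beta _{k}}\mu$ plus $f\,p^{\alpha }p^{\beta _{2}}\cdots p^{\beta _{k}}(\partial _{\alpha }\mu )$. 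On the other hand, in $\int P^{\alpha }(\partial _{p^{\alpha }}f)\,p^{\beta _{2}}\cdots p^{\beta _{k}}\,\omega _{p}$ with $P^{\alpha }=-\Gamma _{\lambda \mu }^{\alpha }p^{\lambda }p^{\mu }$, I integrate by parts in the fibre variables $p^{\alpha }$; the boundary terms vanish by the decay hypothesis, and since $\Gamma $ and $\mu $ do not depend on $p$ the derivative $\partial _{p^{\alpha }}$ falls only on the monomial $p^{\lambda }p^{\mu }p^{\beta _{2}}\cdots p^{\beta _{k}}$. Using the symmetry $\Gamma _{\lambda \mu }^{\alpha }=\Gamma _{\mu \lambda }^{\alpha }$ this produces exactly $2\,\Gamma _{\alpha \sigma }^{\alpha }\,T^{\sigma \beta _{2}\ldots \beta _{k}}$ together with $\sum _{j=2}^{k}\Gamma _{\lambda \mu }^{\beta _{j}}\,T^{\lambda \mu \beta _{2}\ldots \widehat{\beta _{j}}\ldots \beta _{k}}$.

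The computation is then closed by the classical identity $\Gamma _{\lambda \alpha }^{\lambda }=\tfrac{1}{2}g^{\mu \nu }\partial _{\alpha }g_{\mu \nu }=\partial _{\alpha }\ln \mu $, which identifies the ``measure term'' $f\,(\partial _{\alpha }\ln \mu )\mu$ with $\Gamma _{\sigma \alpha }^{\sigma }$ times the integrand of $T^{\alpha \beta _{2}\ldots }$. Adding everything: $\int (\partial _{\alpha }f)p^{\alpha }p^{\beta _{2}}\cdots \mu $ equals $\partial _{\alpha }T^{\alpha \beta _{2}\ldots }-\Gamma _{\sigma \alpha }^{\sigma }T^{\alpha \beta _{2}\ldots }$, and combining with the $2\,\Gamma _{\alpha \sigma }^{\alpha }T^{\sigma \ldots }$ from the fibre integration by parts leaves a single net $+\Gamma _{\alpha \sigma }^{\alpha }T^{\sigma \beta _{2}\ldots }$; together with $\sum _{j}\Gamma _{\alpha \sigma }^{\beta _{j}}T^{\alpha \sigma \beta _{2}\ldots \widehat{\beta _{j}}\ldots }$ (which uses the symmetry of $T$ to place the contracted index in the second slot) this is precisely the coordinate expression of $\nabla _{\alpha }T^{\alpha \beta _{2}\ldots \beta _{k}}$, proving the identity. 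The points requiring care are (i) justifying both the differentiation under the integral and the fibre integration by parts, i.e.\ controlling $f$ at large $|p|$ — here one uses that $f$ is, in the cases of interest, compactly supported in momentum — and (ii) the bookkeeping of the Christoffel contractions, in particular that the factor $2$ coming from $\partial _{p^{\alpha }}(p^{\lambda }p^{\mu }\cdots )$ splits so as to cancel the measure term and reproduce exactly the connection terms of $\nabla $. A coordinate-free variant is possible — using that $f\theta $ is a top-degree form on $TV$ and $\mathcal{L}_{X}\theta =0$, so $\mathcal{L}_{X}(f\theta )=(\mathcal{L}_{X}f)\theta $, and integrating $i_{X}$ of appropriate forms over the fibres — but the index computation above is the most transparent route.
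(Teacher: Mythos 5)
Your argument is correct, and it is the standard proof of this classical fact; the paper itself states the theorem without proof (it is quoted as a known result of relativistic kinetic theory, cf.\ the textbook reference \cite{YCB:GRbook}), so there is nothing in the paper to compare against step by step. Your key identity
\begin{equation*}
\nabla _{\alpha }T^{\alpha \beta _{2}\ldots \beta _{k}}=\int_{T_{m}V}(\mathcal{L}_{X}f)\,p^{\beta _{2}}\cdots p^{\beta _{k}}\,\omega _{p}
\end{equation*}
is exactly the right intermediate statement, and your bookkeeping checks out: the fibre integration by parts produces $2\Gamma ^{\alpha }_{\alpha \sigma }T^{\sigma \beta _{2}\ldots }+\sum _{j}\Gamma ^{\beta _{j}}_{\lambda \mu }T^{\lambda \mu \ldots }$, the $x$-differentiation under the integral produces $\partial _{\alpha }T^{\alpha \beta _{2}\ldots }-\Gamma ^{\sigma }_{\sigma \alpha }T^{\alpha \beta _{2}\ldots }$ via $\Gamma ^{\lambda }_{\lambda \alpha }=\partial _{\alpha }\ln \mu$ (with $\mu =|\det g|^{1/2}$ in Lorentzian signature, matching the paper's convention up to the usual absolute value), and the net result is the coordinate expression of the covariant divergence. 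Two remarks: your decay/compact-support hypothesis on $f(m,\cdot )$ is precisely what the paper assumes later ($\bar f$ compactly supported in each future timelike cone), so the boundary terms genuinely vanish; and your computation is the appropriate one for this paper's setting of a continuous range of masses, where the moments are integrals over the whole fibre $T_{m}V$ --- had $f$ been confined to a fixed mass shell, one would instead integrate over the shell and use that $X$ is tangent to it, a variant your all-of-$\mathbb{R}^{n+1}$ integration by parts silently bypasses, correctly so here.
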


\subsection{\textbf{The} Einstein-Vlasov system.}

Given a smooth manifold $V,$ the unknowns are the metric $g$ on $V$ and the
distribution function $f$ on $TV.$ They must satisfy the Einstein-Vlasov
system
\begin{equation}
S_{\alpha \beta }=T_{\alpha \beta },\text{ \ \ \ on \ }V,\text{ \ \ \ }%
\mathcal{L}_{X}f=0\text{ \ on \ }TV,  \notag
\end{equation}
where $S_{\alpha \beta }$ is the Einstein tensor of $g,$ hence satisfies the
contracted Bianchi identities, $\nabla _{\alpha }S^{\alpha \beta }\equiv 0$
and
\begin{equation*}
T_{\alpha \beta }:=\int_{R^{n+1}}p_{\alpha }p_{\beta }f(.,p)\omega _{p}
\end{equation*}
is the second moment of the distribution function $f,$ hence such that when
the Vlasov equation holds
\begin{equation}
\nabla _{\alpha }T^{\alpha \beta }=0.  \notag
\end{equation}
The Einstein-Vlasov system is therefore coherent.

\section{Cauchy problem on a characteristic cone}

The proof of local existence for solutions of the Cauchy problem with
initial data for $g $ on a spacelike manifold $S$ and for $f$ on $TS$ can be
found in~\cite{YCB:GRbook} and references therein. In what follows we
consider the case where the initial manifold is a characteristic cone of $g.$

The future characteristic cone $C_{O}$ of vertex $O$ for a Lorentzian metric
$g$ is the set covered by future directed null geodesics issued from $O$. We
choose as in I and II coordinates $y^{\alpha }$ such that the coordinates of
$O$ are $y^{\alpha }=0$ and the components $g^{\lambda \mu }(0,0)$ take the
diagonal Minkowskian values, $(-1,1,\ldots ,1)$. If $g$ is Lorentzian and $%
C^{2}$ in a neighbourhood $U$ of $O$ there is an eventually smaller
neighbourhood of $O$, still denoted $U$, such that $C_{O}\cap U$ is an $n$
dimensional manifold, differentiable except at $O$, and there exist in $U$
coordinates $y:=(y^{\alpha })\equiv (y^{0}$, $y^{i}$, $i=1,\ldots ,n)$ in
which $C_{O}$ is represented by the equation of a Minkowskian cone with
vertex $O$,
\begin{equation}
C_{O}:=\{r-y^{0}=0\},\quad r:=\{\sum (y^{i})^{2}\}^{\frac{1}{2}}.
  \notag
\end{equation}
The null rays of $C_{O}$ are represented by the generators of the
Minkowskian cone with tangent vector $\ell .$ We overline as in I and II
traces on $C_{O},$ and underline components in the $y$ coordinates. The
components of $\ell $ are $\underline{\ell ^{0}}=1$, $\underline{\ell ^{i}}%
=r^{-1}y^{i}.$ We denote by $Y_{O}$ the causal future of $O,$ $y^{0}\geq r.$
We denote by $TC_{0}$ the tangent bundle with base $C_{O},$ by $%
T_{g}^{+}C_{O}$ the subbundle with fiber future timelike vectors for the
metric $g.$

\subsection{Cauchy problem for the Vlasov equation with data on $C_{O}$,
given g.}

Let the $n+1$ dimensional spacetime $(V,g)$ be a given Lorentzian manifold.
The Vlasov equation, with $X=X(g),$ and unknown $f$,
\begin{equation*}
\mathcal{L}_{X}f=0
 ,
\end{equation*}
is a first order linear partial differential equation for the distribution
function $f$ on the tangent bundle $TV,$ differential equation on the
geodesic flow which reads in the coordinates $y,$ $\underline{p}$ of $TV$%
\begin{equation}
\frac{df(y(\lambda ),\underline{p}(\lambda ))}{d\lambda }=0
 ,
\notag
\end{equation}
where $y$ and $\underline{p}$ are solutions of the differential system
\begin{equation}
\frac{dy^{\alpha }}{d\lambda }=\underline{p}^{\alpha },\text{ \ \ }\frac{d%
\underline{p}^{\alpha }}{d\lambda }=\underline{P}^{\alpha }\equiv -%
\underline{\Gamma _{\lambda \mu }^{\alpha }}\underline{p}^{\lambda }%
\underline{p}^{\mu }.  \notag
\end{equation}
This quasilinear first order differential system has, if its coefficients
are Lipshitzian i.e. if the considered metric $g$ is $C^{1,1}$, one and only
one solution for $\lambda -\lambda _{0}$ small enough, $(y^{i}(\lambda ),$ $%
y^{0}(\lambda ),\underline{p}^{\alpha }(\lambda ))$, taking for $\lambda
=\lambda _{0}$ given values
\begin{equation}
y^{i}(\lambda _{0})=\bar{y}^{i},\text{ \ }y^{0}(\lambda _{0})=r,
 \text{\ }\underline{p}^{\alpha }(\lambda_{0})=\underline{\pi }^{\alpha },  \notag
\end{equation}
$\underline{\pi }^{\alpha },$ a vector in $R^{n+1}$ timelike for $g.$ The
image in $TV$of the obtained curves
\begin{equation}
\lambda \mapsto (y^{i}(\lambda ),\text{ }y^{0}(\lambda ),\underline{p}%
^{\alpha }(\lambda )),  \notag
\end{equation}
is the set of future timelike geodesics (we mean the curves and their
tangents) issued from points of $T^{+}C_{O}$ in a neighbourhood of $O.$ Well
known properties of timelike geodesics (in particular the conservation of $%
g(p,p)$ under the geodesic flow) show that conversely a past directed
timelike geodesic issued from a point $(m,p)\in $ $T^{+}Y_{O}$, with $m$ in
a small enough neighbourhood of $O,$ will meet $T^{+}C_{O}.$

The initial data for $f$ is a function $\bar{f}$ on $T^{+}C_{O}$. The
solution of the Cauchy problem for the Vlasov equation with this initial
data is given in a neighborhood of $T^{+}O$ in $T^{+}Y_{O}$ by
\begin{equation}
\underline{f}(y(\lambda ),\underline{p}(\lambda ))=\underline{\bar{f}}%
(y^{\alpha }(\lambda _{0}),\underline{p}^{\alpha }(\lambda _{0})).  \notag
\end{equation}

\subsection{Cauchy problem for $g$ with data on $C_{O}$ for the Einstein
equations in wave gauge, given the stress energy tensor of $f$.}

The Einstein tensor in wave gauge forms a quasilinear system of wave
operators deduced from the identities
\begin{equation}
R_{\alpha \beta }^{(h)}(g)\equiv -\frac{1}{2}g^{\lambda \mu }\partial
_{\lambda \mu }^{2}g_{\alpha \beta }+q_{\alpha \beta }(g)(\partial
g,\partial g).  \notag
\end{equation}
The stress energy tensor of a given distribution function $f_{1}$ on a
spacetime $(V,g_{1})$ is an integral operator on $f_{1}$ and $g_{1}$%
\begin{equation*}
T_{1,\alpha \beta }\equiv \int_{R^{n+1}}p_{1,\alpha }p_{1,\beta
}f_{1}(.,p)\omega _{g_{1}}.\text{ \ }
\end{equation*}
One considers the system of quasilinear equations for a metric $g_{2},$ with
$T_{1}$ known,
\begin{equation*}
\text{Einstein}^{(h)}(g_{2})=T_{1} .
\end{equation*}
If $T_{1}$ is smooth there exists (Cagnac - Dossa theorem~\cite{DossaAHP}) $%
t_{0}>0$ such that these equations have a solution $g_{2}$ with trace on $%
C_{O}\cap \{r\in \lbrack 0,t_{0}]\}$ the trace $\bar{g}$ of a smooth spacetime metric.

\subsection{Solution of the Einstein-Vlasov reduced system}

One shows by iteration using known theorems that the Einstein-Vlasov system
in wave gauge \ with unknowns $g$ and $f$ admits a solution with initial
data $\bar{g}$ and $\bar{f} $ if $\bar{g}$ is the trace of a sufficiently smooth
spacetime tensor field  and $\bar f$ is the trace of a sufficiently smooth function on $TV$, compactly supported on each future timelike cone, with for example $\bar{f}=0$ and $\bar{g}$ trace of the
Minkowski metric in a neighbourhood of the vertex.

\subsection{Solution of the original system}

To show that a solution of the reduced system satisfies the original
equations, if the initial data satisfy appropriate constraints, we proceed
as in I where we have proved the decomposition $\ell ^{\alpha }\bar{S}%
_{\alpha \beta }\equiv \mathcal{C}_{\alpha }+\mathcal{L}_{\alpha },$ where $%
\mathcal{C}_{\alpha }$ depends only on $\bar{g}$ and $\mathcal{L}_{\alpha }$
is a linear homogeneous differential operator on the wave gauge vector. The
Bianchi identities show, as in I, that a solution of the reduced system
satisfies the original equations if the initial data satisfy the
constraints, namely
\begin{equation*}
\mathcal{C}_{\alpha }=\ell ^{\alpha }\bar{T}_{\alpha \beta }.
\end{equation*}
We now show that in our astrophysical setting these constraints form again a
hierarchical system of ordinary differential equations if we use coordinates
$x^{\alpha }$ adapted to the null structure of $C_{O}$, defined by
\begin{equation}
x^{0}=r-y^{0},\qquad x^{1}=r\text{ \ \ and }x^{A}=\mu ^{A}(r^{-1}y^{i}),
\notag  \label{3.1}
\end{equation}
$A=2,...n$, local coordinates on the sphere $S^{n-1}$, or angular polar
pseudo coordinates, such that the curves $x^{0}=0$, $x^{A}=$constant are the
null geodesics issued from $O$ with tangent $\ell :=\frac{\partial }{%
\partial x^{1}}.$ The trace $\bar{g}$ on $C_{O}$ of the, a priori general,
Lorentzian metric $g$ that we are going to construct is such that $\overline{%
g}_{11}=0$ and $\bar{g}_{1A}=0;$ we use the notation
\begin{equation}
\bar{g}\equiv \bar{g}_{00}(dx^{0})^{2}+2\nu _{0}dx^{0}dx^{1}+2\nu
_{A}dx^{0}dx^{A}+\bar{g}_{AB}dx^{A}dx^{B}.  \notag
\end{equation}
In particular for the Minkowski metric $\eta \equiv \bar{\eta}$ it holds
that
\begin{equation*}
\eta _{00}=-1,\text{ }\nu _{0}=\nu _{A}=0,\text{ \ }\eta _{AB}=r^{2}s_{AB},
\end{equation*}
where $s_{AB}dx^{A}dx^{B}$ is the metric of the round sphere $S^{n-1}$. The
volume element in the tangent space to $V$ at a point of $C_{O}$ is in the $%
x $-coordinates
\begin{equation*}
\bar{\omega}_{p}\equiv \nu _{0}\det \tilde{g}.
\end{equation*}

\section{Geometric initial data and gauge functions.}

The geometric data is a degenerate quadratic form $\tilde{g}$ induced on $%
C_{O}$ by our unknown $g;$ it reads in coordinates $x^{1}$, $x^{A}$
\begin{equation}
\tilde{g}\equiv \tilde{g}_{AB}dx^{A}dx^{B},  \notag
\end{equation}
i.e.\ $\tilde{g}_{11}\equiv \tilde{g}_{1A}\equiv 0$ while $\tilde{g}%
_{AB}dx^{A}dx^{B}\equiv $ $\bar{g}_{AB}dx^{A}dx^{B}$ is an $x^{1}$-dependent
Riemannian metric on $S^{n-1}.$ While $\tilde{g}$ is intrinsically defined,
it is not so for $\bar{g}_{00}$, $\nu _{0}$, $\nu _{A}$, they are
gauge-dependent quantities, the unknowns of our constraints.

As in the spacelike case, the second fundamental form $\chi $ of $C_{O}$
appears in the constraints, but here it depends only on the induced metric
because the normal to $C_{O},$ $\ell \equiv =\frac{\partial }{\partial x^{1}}%
,$ is also tangent. We have defined $\chi $ in I as the tensor $\chi $ on $%
C_{O}$ given by the Lie derivative $\mathcal{L}_{\ell }\tilde{g}$ with
respect to $\ell $ of the degenerate quadratic form $\tilde{g}$, namely in
the coordinates $x^{1},x^{A}:$
\begin{equation*}
\chi _{AB}\equiv \frac{1}{2}\partial _{1}\bar{g}_{AB},\text{ \ \ }\chi
_{A1}\equiv \chi _{11}\equiv 0.
\end{equation*}

We denote by $\tau $ the mean extrinsic curvature $\tau :=g^{AB}\chi _{AB}.$

\section{The first constraint operator}

We have found in I the identity
\begin{equation*}
\bar{\ell}^{\beta }\bar{S}_{1\beta }\equiv \overline{R}_{11}\equiv -\partial
_{1}\tau +\nu ^{0}\partial _{1}\nu _{0}\tau -\frac{1}{2}\tau (\overline{%
\Gamma }_{1}+\tau )-\chi _{A}{}^{B}\chi _{B}{}^{A}.
\end{equation*}
By definition of the wave-gauge vector $H$ we have
\begin{equation*}
\overline{\Gamma }_{1}\equiv \overline{W}_{1}+\overline{H}_{1},\text{ \ \ }%
\overline{W}_{1}=-\nu _{0}x^{1}\overline{g}^{AB}s_{AB},
\end{equation*}
and hence $\overline{R}_{11}$ decomposes as
\begin{equation*}
\overline{R}_{11}\equiv \mathcal{C}_{1}+\mathcal{L}_{1}\;,\text{\ where\
}\mathcal{L}_{1}:=-\frac{1}{2}\overline{H}_{1}\tau \;\text{vanishes in wave
gauge},
\end{equation*}
while $\mathcal{C}_{1}$\ involves only the values of the coefficients $%
\overline{g}_{AB}$ and $\nu _{0}$ on the light-cone; the first constraint
reads
\begin{equation*}
\mathcal{C}_{1}:=-\partial _{1}\tau +\{\nu ^{0}\partial _{1}\nu _{0}-\frac{1%
}{2}(\overline{W}_{1}+\tau )\}\tau -\chi _{A}{}^{B}\chi _{B}{}^{A}=\bar{T}%
_{11}\equiv (\nu _{0})^{2}\bar{T}^{00},
\end{equation*}
\begin{equation*}
\bar{T}^{00}\equiv \det \tilde{g}\int_{R^{n+1}}(p^{0})^{2}\bar{f}%
(.,p)d^{n+1}p,\text{ \ \ }d^{n+1}p=dp^{0}dp^{1}...dp^{n}.
\end{equation*}
The first constraint contains $\nu _{0}$ as only unknown if $\tilde{g}$ and $%
\bar{f}$ are given. The resulting ODE is singular for $\tau =0$ since $\tau $
multiplies $\partial _{1}\nu _{0}.$ However the alternative used by Rendall~\cite{RendallCIVP} and Damour and Schmidt~\cite{DamourSchmidt} of prescribing only the conformal class of $%
\tilde{g},$ splitting $\chi $ into its conformally invariant traceless part $%
\sigma $ and its unknown trace $\tau $ and imposing to $\nu _{0}$ to annul
the parenthesis in  $\mathcal{C}_{1}$ does not work here because $\bar{%
T}_{11}$ depends upon $\nu _{0}.$

We look for a solution of the first constraint tending to 1 as $r$ tends to
zero, hence set $\nu _{0}=1+u$ and write the first constraint as the first
order differential equation (recall that $\nu ^{0}=(\nu _{0})^{-1})$
\begin{equation*}
r\partial _{1}u=r(a+b)+\beta +(a+2b+3\beta )u+(b+3\beta )u^{2}+\beta u^{3},
\end{equation*}
where
\begin{equation}
a:=\tau ^{-1}\partial _{1}\tau +\frac{1}{2}\tau +\tau ^{-1}|\chi |^{2},\text{
\ }b:=-\frac{1}{2}\bar{g}^{AB}rs_{AB},\text{ \ }\beta \equiv r\tau ^{-1}\bar{%
T}^{00}.  \notag
\end{equation}
To avoid analytical difficulties near the vertex, we assume that there
exists a neighborhood $0<r\leq r_{0}$ of the vertex $O$ within $C_{O}$ on
which the initial data $\bar{f}$ vanishes and $\tilde{g}$ coincides with $%
\tilde{\eta}$. In this neighbourhood the coefficients reduce to their
Minkowskian values
\begin{equation*}
a_{0}=\frac{n-1}{2r},\text{ \ }b_{0}:=-\frac{n-1}{2r},\text{ \ }\beta _{0}=0,
\end{equation*}
the solution $u$ tending to zero as $r$ tends to zero is then $u=0$ for $%
r\leq r_{0}.$ For $r>r_{0}$ it is the solution $u$ vanishing for $r=r_{0}$
of a smooth equation with smooth coefficients, equivalently of the integral
equation
\begin{equation*}
u(r,x^{A})=\int_{r_{0}}^{r}\{(a+b)+(x^{1})^{-1}[\beta +(a+2b+3\beta
)u+(b+3\beta )u^{2}+\beta u^{3}]\}dx^{1}.
\end{equation*}
This solution can be computed by the Picard iteration method as long as the
interval $[r_{0},r]$ is small enough; its size, and the norm of $u$ depend
on the size of the various coefficients. We will always have $\nu _{0}>0$ in
a small enough neighbourhood, as required for the Lorentzian character of $%
\bar{g}$.

\section{The $C_{A}$ constraint}

We have written in I the $\mathcal{C}_{A}$ constraint operator. The $C_{A}$
constraint with kinetic source reads
\begin{equation}
\mathcal{C}_{A}-\bar{T}_{1A}\equiv -\frac{1}{2}(\partial _{1}\xi _{A}+\tau
\xi _{A})+\tilde{\nabla}_{B}\chi _{A}^{B}-\frac{1}{2}\partial _{A}\tau
+\partial _{A}(\frac{1}{2}\bar{W}_{1}+\nu _{0}\partial _{1}\nu ^{0})-\bar{T}%
_{1A}=0  ,
\notag
\end{equation}
where $\xi _{A}$ is defined as
\begin{equation}
\xi _{A}:=-2\nu ^{0}\partial _{1}\nu _{A}+4\nu ^{0}\nu _{C}\chi
_{A}^{C}+\left( \bar{W}^{0}-\frac{2}{r}\nu ^{0}\right) \nu _{A}+\bar{g}_{AB}%
\bar{g}^{CD}(S_{CD}^{B}-\tilde{\Gamma}_{CD}^{B})  \notag
,
\end{equation}
while
\begin{equation}
\bar{T}_{1A}(x)\equiv \int_{R^{n+1}}p_{1}p_{A}F(x,p)d^{n+1}p=\nu
_{0}\{\int_{R^{n+1}}p^{0}(\nu _{A}p^{0}+g_{AB}p^{B})d^{n+1}p  \notag
.
\end{equation}
When $\nu _{0}$ has been determined the only unknown in $\mathcal{C}_{A}-%
\bar{T}_{1A}$ is $\nu _{A}.$ It satisfies a linear equation, its solution
follows the same lines as in I.

\section{The $C_{O}$ constraint}

When $\nu _{0}$ and $\nu _{A}$ have been determined, the $\mathcal{C}_{0}$
constraint operator, $\ell ^{\beta }\bar{S}_{0\beta }\equiv \bar{S}_{01},$
contains as only unknown $ \bar{g}_{00}\equiv \underline{\bar{g}_{00}},$ we
have found in $I$ and $II$ after long computations that the $\mathcal{C}_{0}$
constraint can be written, using the other constraints
\begin{equation}
\partial _{1}\zeta +(\kappa +\tau )\zeta +\frac{1}{2}\{\partial _{1}\bar{W}%
^{1}+(\kappa +\tau )\bar{W}^{1}+\tilde{R}-\frac{1}{2}g^{AB}\xi _{A}\xi _{B}+%
\overline{g}^{AB}\tilde{\nabla}_{A}\xi _{B}\}+h_{f}=0,  \notag
\end{equation}
\begin{equation}
h_{f}:=\frac{1}{2}\bar{g}^{11}\bar{T}_{11}+\bar{g}^{1A}\bar{T}_{1A}+\bar{g}%
^{10}\bar{T}_{10},  \notag
\end{equation}
\begin{equation*}
\kappa \equiv \nu ^{0}\partial _{1}\nu _{0}-\frac{1}{2}(\bar{W}_{1}+\tau
),\quad \bar{W}_{1}\equiv \nu _{0}\bar{W}^{0},\quad \bar{W}^{0}\equiv \bar{W}%
^{1}\equiv -r\bar{g}^{AB}s_{AB}.
\end{equation*}
The relation between $\zeta \,$and $\bar{g}_{00}$ is
\begin{equation}
\zeta :=(\partial _{1}+\kappa +\frac{1}{2}\tau )\bar{g}^{11}+\frac{1}{2}\bar{%
W}^{1},  \notag
\end{equation}
where $\bar{g}^{11}$ and $\bar{g}_{00}$ are linked by the linear relation
\begin{equation}
\underline{\bar{g}_{00}}\equiv \bar{g}_{00}\equiv -\bar{g}^{11}(\nu
_{0})^{2}+\bar{g}^{AB}\nu _{B}\nu _{A}.  \notag
\end{equation}
The component $\bar{T}_{01}$ of the stress energy tensor of the distribution
function reads
\begin{equation}
\bar{T}_{01}(\bar{x})\equiv \int_{T_{m}}p_{0}p_{1}\bar{f}(\bar{x}%
,p)d^{n+1}p\equiv \nu _{0}\int_{T_{m}}(\bar{g}_{00}p^{0}+\nu _{0}p^{1}+\nu
_{A}p^{A})p^{0}\bar{f}(\bar{x},p)\text{ }d^{n+1}p.  \notag
\end{equation}
The only unknown remaining in the $\mathcal{C}_{0}$ constraint is $\bar{g}%
_{00}.$ It appears linearly. The solution equal to $-1$ in $[0,r_{0}]$ is
smooth and negative in $[0,r_{0}+\varepsilon ] .$

\section{The tachyon problem.}

For a physical reason, namely the positivity of the masses of the considered
``particles'', the initial data $\bar{f}$ of the distribution function must
have its support in the subset
\begin{equation}
\bar{g}_{00}(p^{0})^{2}+2\nu _{0}p^{0}p^{1}+2\nu _{A}p^{0}p^{A}+\bar{g}%
_{AB}p^{A}p^{B}<0.  \tag{1}
\end{equation}
Since only $\bar{g}_{AB}$ has been a priori given the above inequality gives
an a posteriori restriction on the support of $\bar{f}.$ It is natural to
assume that this support in momentum space at the tip, where all components
of the metric are known, is contained within the future Minkowski timelike-cone.
Then, by continuity of the data $\bar{f}$\ and of the solutions of the
constraints, one obtains some neighborhood of the tip in $C_{O}$\ where this
support will remain in the future cone of the constructed metric. This
neighborhood will provide the relevant initial data. A precise statement to
this effect can be obtained as follows\textbf{:}

Recall that $\underline{p}^{0}\equiv p^{0}+p^{1}$ is the time component of $%
p $ in the $y$ coordinates, the condition that we will give is that $p$ is
strictly timelike for a known metric on the support of $\bar{f}$. The
following lemma is a result of the continuity properties found for $\bar{g}$:

\begin{lemma}
Assume that there exists $r_{0}\geq 0$ such that the support of the data $%
\bar{f}$ in $TC_{O}$ is contained in the subset
\begin{equation}
(\underline{p}^{0})^{2}>k\{\bar{g}_{AB}p^{A}p^{B}+(p^{1})^{2}\},\text{ \ \ \
}k>1\text{ \ \ and \ }p^{0}>0  \tag{2}
\end{equation}
with $\bar{g}_{Ab}=\eta _{AB}$ for $r\leq r_{0}$ (possibly $r_{0}=0).$ Then
there is a neighbourhood $r_{0}\leq r<R$ in $C_{O\text{ }}$such that the
support of $\bar{f}$ in $TC_{O}\cap \{r_{0}\leq r<R\}$ includes no tachyons
for the metric $\bar{g}.$
\end{lemma}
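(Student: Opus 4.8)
The plan is to show that the timelike condition (1) for $\bar g$ is implied by the coordinate-sphere condition (2) once $R$ is chosen small enough, using the uniform closeness of the constructed metric $\bar g$ to the Minkowski metric $\bar\eta$ that follows from the Picard-iteration estimates in Sections 6–9. First I would translate condition (1) into the $y$-coordinates, or rather rewrite it directly: the left side of (1) is $\bar g(p,p)$, which in terms of the decomposition $\bar g\equiv \bar g_{00}(dx^0)^2+2\nu_0 dx^0dx^1+2\nu_A dx^0dx^A+\bar g_{AB}dx^Adx^B$ reads $\bar g_{00}(p^0)^2+2\nu_0 p^0p^1+2\nu_A p^0p^A+\bar g_{AB}p^Ap^B$. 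I would complete the square in $p^1$ (the variable multiplied by the nowhere-vanishing $2\nu_0 p^0$), isolating the ``good'' negative term $-\,(\nu_0 p^0 + \text{lower order})^2/(\text{something})$ against the manifestly non-negative term $\bar g_{AB}p^Ap^B$ and error terms proportional to $\bar g_{00}+1$, $\nu_A$, etc., all of which are $O(R)$ small on $\{r\le R\}$ by the constraint solutions constructed above (which equal their Minkowski values on $[0,r_0]$ and depend continuously on $x^1$, with $\nu_0\to1$, $\nu_A\to0$, $\bar g_{00}\to-1$).

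Next I would make the comparison quantitative. Set $\delta(R):=\sup_{r_0\le r\le R}\big(|\nu_0-1|+|\nu_A|_{\tilde g}+|\bar g_{00}+1|\big)$, which tends to $0$ as $R\downarrow r_0$ by the continuity statements at the ends of Sections 6, 7 and 9. On the support of $\bar f$ we have, by hypothesis (2), the quantitative bound $(\underline p^0)^2=(p^0+p^1)^2 > k\{\bar g_{AB}p^Ap^B+(p^1)^2\}$ with $k>1$ and $p^0>0$; this is a uniform ``strict timelikeness for the reference cone'' controlling $p^0,p^1$ in terms of the transverse part. I would then estimate $\bar g(p,p)$ from above by $\bar\eta(p,p)$ plus a remainder bounded by $C\,\delta(R)\,\big((p^0)^2+(p^1)^2+\bar g_{AB}p^Ap^B\big)$, where $C$ depends only on fixed geometric quantities (e.g. a lower bound for $\nu_0$ and bounds on $\bar g^{AB}$, which hold on the relevant compact range once $R$ is small). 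Since $\bar\eta(p,p)$ in these coordinates is $-( \underline p^0)^2 + \bar g_{AB}^{(\eta)}p^Ap^B$ — and more to the point, the hypothesis (2) already forces this combination to be bounded above by $-(k-1)\{\bar g_{AB}p^Ap^B+(p^1)^2\}/k$ or a similar strictly negative quantity — the fixed negative gap from (2) dominates the $O(\delta(R))$ error provided $C\,\delta(R)$ is smaller than that gap. Choosing $R>r_0$ small enough that $C\,\delta(R)<(k-1)/(2k)$, say, gives $\bar g(p,p)<0$ on the support of $\bar f$ in $\{r_0\le r<R\}$, i.e.\ no tachyons, together with $p^0>0$ preserved since $\nu_0>0$ there, which is what is claimed.

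The step I expect to be the main obstacle is the bookkeeping in the comparison $\bar g(p,p)$ versus $\bar\eta(p,p)$: one must be careful that the error terms are genuinely controlled by $\delta(R)$ times the \emph{positive-definite} combination $(p^0)^2+(p^1)^2+\bar g_{AB}p^Ap^B$ (so that they can be absorbed against the strictly negative quantity supplied by (2)), and in particular that the cross term $2\nu_A p^0p^A$ is handled by Cauchy–Schwarz with respect to the fixed Riemannian metric $\bar g_{AB}$ rather than picking up an uncontrolled factor. Since $\bar g_{AB}$ is itself $x^1$-dependent and only a priori given, I would also note that on $[0,r_0]$ it equals $\eta_{AB}$ and on $[r_0,R]$ it stays uniformly equivalent to the round-sphere metric for $R$ small, which is implicit in the earlier constructions. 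Once these uniform constants are pinned down, the choice of $R$ is immediate and the lemma follows; I would remark that taking $r_0=0$ is allowed, with the role of ``$\delta$ small near the vertex'' played directly by continuity of the data and the constraint solutions at $O$.
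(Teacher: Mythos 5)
Your overall strategy is the same as the paper's: perturb off the Minkowskian values by setting $\nu _{0}=1+u$ and treating $u$, $\nu _{A}$ and the deviation of the $00$-component as quantities tending to zero as $r\rightarrow r_{0}$, handle the cross term $2\nu _{A}p^{0}p^{A}$ by Cauchy--Schwarz with respect to $\bar{g}_{AB}$, and choose $R$ by continuity of the constraint solutions so that hypothesis (2) forces strict timelikeness. Two small points of bookkeeping: since (2) is phrased with $\bar{g}_{AB}$ itself, the comparison form should keep $\bar{g}_{AB}p^{A}p^{B}$ in the angular slot (as the paper does), so no uniform equivalence of $\bar{g}_{AB}$ with $r^{2}s_{AB}$ is ever needed; and the Minkowskian form on the cone is $-(\underline{p}^{0})^{2}+(p^{1})^{2}+\eta _{AB}p^{A}p^{B}$, so you dropped a $(p^{1})^{2}$, which is harmless since your quoted gap retains it.

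The genuine gap is in the absorption step, which is precisely the point you flagged as the main obstacle and then did not resolve correctly. Your error term is $C\,\delta (R)\,\{(p^{0})^{2}+(p^{1})^{2}+\bar{g}_{AB}p^{A}p^{B}\}$ and you propose to dominate it by the gap $(k-1)\{(p^{1})^{2}+\bar{g}_{AB}p^{A}p^{B}\}/k$ by requiring $C\,\delta (R)<(k-1)/(2k)$. This cannot work as stated: hypothesis (2) bounds $(\underline{p}^{0})^{2}$ (hence $(p^{0})^{2}$) from \emph{below}, not above, in terms of the transverse quantity, so for momenta with $p^{1}=p^{A}=0$ and $p^{0}$ arbitrary your gap vanishes while the error $C\,\delta (R)(p^{0})^{2}$ does not, and the chain of inequalities does not close (even though the conclusion is true for such momenta). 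The repair is exactly the paper's inequalities (4)--(7): use $(p^{0})^{2}\leq 2\{(\underline{p}^{0})^{2}+(p^{1})^{2}\}$ and $2p^{0}p^{1}\leq 2(\underline{p}^{0})^{2}+3(p^{1})^{2}$ to convert the $(p^{0})^{2}$- and $p^{0}p^{1}$-errors into pieces proportional to $(\underline{p}^{0})^{2}$ and to $(p^{1})^{2}$, absorb the $(\underline{p}^{0})^{2}$-proportional piece into the leading $-(\underline{p}^{0})^{2}$ term, and absorb only the transverse-proportional piece into the gap; concretely, split $k=k_{1}^{-1}k_{2}$ with $k_{1}<1<k_{2}$ and choose $R$ so that $1-(2|\alpha |+4|u|+|\tilde{\nu}|)\geq k_{1}$ while the coefficient of $(p^{1})^{2}+\bar{g}_{AB}p^{A}p^{B}$ on the other side stays below $k_{2}$ for $r_{0}\leq r\leq R$. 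With that two-sided splitting your argument coincides with the paper's proof.
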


\begin{proof}
The no-tachyon (positive masses) condition can be written with $\nu
_{0}=1+u, $ $g^{00}=-1+\alpha $ where $u$, $\alpha $ and $\nu _{A}$ tend to
zero as $r $ tends to $r_{0}$ (possibly zero)
\begin{equation}
(\underline{p}^{0})^{2}-(p^{1})^{2}-\alpha (p^{0})^{2}+2p^{0}(up^{1}+\nu
_{A}p^{A})-\bar{g}_{AB}p^{A}p^{B}>0.  \tag{3}
\end{equation}
Remark that $p^{0}+p^{1}\equiv \underline{p}^{0}$ implies
\begin{equation}
(p^{0})^{2}\leq 2\{(\underline{p}^{0})^{2}+(p^{1})^{2}\},\text{ \ }%
2p^{0}p^{1}\leq (p^{0})^{2}+(p^{1})^{2}\leq 2(\underline{p}%
^{0})^{2}+3(p^{1})^{2},  \tag{4}
\end{equation}
while setting $|\tilde{\nu}|:=(\bar{g}^{AB}\nu _{A}\nu _{^{B}})^{\frac{1}{2}%
} $ we have
\begin{equation}
|\nu _{A}p^{A}|\leq (\bar{g}_{AB}p^{A}p^{B})^{\frac{1}{Z}}|\tilde{\nu}|\text{
\ hence }2p^{0}\nu _{A}p^{A}\leq \text{\ }|\tilde{\nu}|\{(p^{0})^{2}+\bar{g}%
_{AB}p^{A}p^{B}\} .  \tag{5}
\end{equation}
The inequalities (4) and (5) \ show that the no-tachyon condition (3) is
implied by the following inequality
\begin{equation}
(\underline{p}^{0})^{2}-(2|\alpha |+4|u|+|\tilde{\nu}|)(\underline{p}%
^{0})^{2}>(p^{1})^{2}+\bar{g}_{AB}p^{A}p^{B}+3|u|(p^{1})^{2}+|\tilde{\nu}%
|\{(p^{0})^{2}+\bar{g}_{AB}p^{A}p^{B}.  \tag{6}
\end{equation}

The assumption (2) of the lemma (which depends only on the given data $%
\tilde{g})$ implies that there exist two numbers $k_{1}<1$ and $k_{2}>1$
with $k_{1}^{-1}k_{2}=k$ such that
\begin{equation}
k_{1}(\underline{p}^{0})^{2}>k_{2}\{\bar{g}_{AB}p^{A}p^{B}+(p^{1})^{2}\},%
\text{ \ \ \ }k>1\text{ \ \ and \ }p^{0}>0.  \tag{7}
\end{equation}
The properties of $\alpha ,u$ and $\tilde{\nu}$, which tend to zero as $r$
tends to $r_{0}$, show that there exists $R>r_{0}$ such that for $r_{0}\leq
r\leq R$ it holds that
\begin{equation*}
(\underline{p}^{0})^{2}-(2|\alpha |+4|u|+|\tilde{\nu}|)(\underline{p}%
^{0})^{2}>k_{1}(\underline{p}^{0})^{2},
\end{equation*}
\begin{equation*}
(p^{1})^{2}+\bar{g}_{AB}p^{A}p^{B}+3|u|(p^{1})^{2}+|\tilde{\nu}|\bar{g}%
_{AB}p^{A}p^{B}<k_{2}\{\bar{g}_{AB}p^{A}p^{B}+(p^{1})^{2}\}.
\end{equation*}
These inequalities complete the proof of the lemma.
\end{proof}

\section{Conclusion}

From a smooth solution of the constraints in the $x$ variables, Minkowskian
for $0\leq r<r_{0},$ one deduces that $\bar{g}$ is the trace on $C_{O}$ of a
smooth spacetime function\textbf{, }which completes the proof of the
existence of a solution of an Einstein-Vlasov spacetime in a neighbourhood
of $O$ taking the given data $\tilde{g},\bar{f}.$ Geometric uniqueness is
also easy to prove. The existence up to the vertex is more involved: the
technique of admissible series used in II could be used, but requires
further work.

\end{document}